\newtheorem{theorem}{Theorem}[section]
\newtheorem{proposition}[theorem]{Proposition}
\newtheorem{corollary}[theorem]{Corollary}
\newtheorem{lemma}[theorem]{Lemma}
\newtheorem*{remark}{Remark}
\newtheorem{define}[theorem]{Definition}
\numberwithin{equation}{section}
\def\|{|\kern-1.25pt|}
\def\bfs#1{{\setbox0=\hbox{$\scriptstyle#1$} 
     \kern-.020em\copy0\kern-\wd0
     \kern .040em\copy0\kern-\wd0
     \kern-.020em\raise.01em\box0 }}
\newcommand{\rmv}[1]{}
\begin{document}

\title{Low differentially uniform permutations from Dobbertin APN function over $\mathbb{F}_{2^n}$ }
\author{Yan-Ping Wang$^{1,2}$, WeiGuo Zhang$^{1}${\thanks{Email:~zwg@xidian.edu.cn} }, Zhengbang Zha$^3$ \\
\vspace*{0.0cm}\\
{\small $^1$State Key Laboratory of Integrated Service Networks, Xidian University, Xi'an 710071, China}\\
{\small $^2$College of Mathematics and Statistics, Northwest Normal University, Lanzhou 730070, China}\\
{\small $^3$School of Mathematical Sciences, Luoyang Normal University, Luoyang 471934, China}}
\date{}
\maketitle
\begin{abstract}
Block ciphers use S-boxes to create confusion in the cryptosystems.
Such S-boxes are functions over $\mathbb{F}_{2^{n}}$. These functions should have low differential uniformity,
high nonlinearity, and high algebraic degree in order to resist differential attacks, linear attacks, and higher order differential attacks, respectively.
In this paper, we construct new classes of differentially $4$ and $6$-uniform permutations by modifying the image of the Dobbertin APN function $x^{d}$ with $d=2^{4k}+2^{3k}+2^{2k}+2^{k}-1$ over a subfield of $\mathbb{F}_{2^{n}}$. Furthermore, the algebraic degree and the lower bound of the nonlinearity of the constructed functions are given.
\end{abstract}


{\bf Key Words}\ \ Differential uniformity, permutation polynomial, nonlinearity,  algebraic degree


\section{Introduction}
\label{intro}

Let $\mathbb{F}_{2^n}$ be the finite field with $2^{n}$ elements, and $\mathbb{F}_{2^{n}}^{*}$ be the multiplicative group which consists of all the nonzero elements of $\mathbb{F}_{2^{n}}$. For a function $f:\mathbb{F}_{2^{n}}\rightarrow\mathbb{F}_{2^{n}}$, the derivative of $f(x)$ is defined by $D_{a}f(x) = f(x+a) + f(x)$, where $x\in\mathbb{F}_{2^n}, ~a\in\mathbb{F}_{2^{n}}^{*}$.
For any $b\in\mathbb{F}_{2^{n}}$, we define $\delta_{f}(a,b) = \big|\{x\in\mathbb{F}_{2^{n}} ~|~ D_{a}f(x)=b\} \big|$.
The maximum value $\delta_{f}  = \max\limits_{a\in\mathbb{F}_{2^{n}}^{*},b\in\mathbb{F}_{2^{n}}}\delta_{f}(a,b)$ is called the differential uniformity of $f(x)$.
Let $\omega_{i}(f)=\big|\{(a,b)\in\mathbb{F}^{*}_{2^n}\times\mathbb{F}_{2^n} ~|~ \delta_{f}(a,b)= i \}\big|$. The set $\{\omega_{0}(f), \omega_{2}(f),\cdots, \omega_{\delta_{f}}(f)\}$ is called the differential spectrum of $f(x)$ with the differential uniformity $\delta_{f}$.

Block ciphers should be designed to resist all classical attacks. S-boxes are the core components of block ciphers. The primary purpose of S-boxes is to produce confusion inside block ciphers.
Such S-boxes are nonlinear functions over $\mathbb{F}_{2^{n}}$. Such functions should have low differential uniformity for resisting differential attacks \cite{BS1991}, high nonlinearity for avoiding linear cryptanalysis \cite{ML1994} and high algebraic degree to prevent higher order differential attacks \cite{KL1995}.
It is well known that the lowest differential uniformity achieved by a function $f$ over $\mathbb{F}_{2^n}$ is $\delta_{f}= 2$ and such functions $f$ are called almost perfect nonlinear (APN) functions.
APN functions have important applications in block ciphers. For example, the APN functions $x^{81}$ over $\mathbb{F}_{2^9}$ and $x^5$ over $\mathbb{F}_{2^7}$ have been respectively used in  MISTY and KASUMI block ciphers.
For ease of the implementation in both hardware and software, such functions are required to be defined on $\mathbb{F}_{2^n}$ for even $n$.
It is known that no APN permutations  exist over $\mathbb{F}_{2^n}$ for $n = 2, 4$. Instead, an APN permutation of the field $\mathbb{F}_{2^6}$ was discovered by Dillon et al. in \cite{BDMW2010}.
It is an open problem whether there exists an APN permutation over $\mathbb{F}_{2^n}$ for even $n\geq8$.
So, to resist differential attacks in even dimensions, we can choose differentially $4$-uniform permutations as S-boxes.
A well known example is the multiplicative inverse functions used in the S-box of the Advanced Encryption Standard (AES). This function is differentially $4$-uniform with known maximum nonlinearity and optimal algebraic degree.
A number of recent research works have been devoted to constructing differentially $4$ and $6$-uniform permutation functions with high nonlinearity and high algebraic degree over $\mathbb{F}_{2^{n}}$.

The primary constructions do not need to use previously known functions for designing new functions.
Using the primary constructions, researchers found many classes of differentially $4$-uniform permutation functions on $\mathbb{F}_{2^n}$: Gold functions~\cite{GR1968}, Kasami functions~\cite{KT1971}, Inverse functions~\cite{NK1994}, Bracken-Leander functions~\cite{BG2010} and Bracken-Tan-Tan binomial functions~\cite{BTT2011} and so on. Especially, Nakagawa et al.~\cite{NY2007} constructed a family of quadrinomial differentially $4${\color{red}-}uniform functions by using Albert's finite commutative semifields. However, except Kasami functions and Inverse functions, the algebraic degrees of the other functions
are $2$ or $3$. Note that, to resist the higher order differential attack, we require the algebraic degree to be greater than $3$.

In other ways, the secondary constructions have been proposed to get new differentially $4$ and $6$-uniform permutation functions by modifying known functions.
For example, Li et al.~\cite{LW2013} constructed differentially $4$-uniform permutations on $\mathbb{F}_{2^{2m}}$ via applying APN permutations over $\mathbb{F}_{2^{2m+1}}$.
In recent years, many new constructions consist of piecewise functions, hence functions of the form
\begin{equation} \label{eq:piecewise}
 f(x)= \left\{
                               \begin{array}{ll}
                                g(x), & \textrm{~$x\in \mathbb{S}$},\\
                                h(x), & \textrm{~$x\in \mathbb{F}_{2^n}\backslash \mathbb{S}$}.
                               \end{array}\right.
\end{equation}
The function $f(x)$ is obtained by modifying the image values of $h(x)$ on a subset $\mathbb{S}$ of $\mathbb{F}_{2^n}$. Some new results are obtained from modifying the inverse function $h(x)=x^{-1}$ in (\ref{eq:piecewise}) on a subset of $\mathbb{F}_{2^n}$~\cite{PTW2017,PTW2016,PT2017,QTTL2013,QTLG2016,TCT2015,YWL2011,ZHS2014}. 
More new functions were designed by changing the Gold function $h(x)=x^{2^k+1}$ \cite{CM2019,ZHSJ2014,ZD2012}, the Kasami function $h(x)=x^{2^{2k}-2^{k}+1}$ \cite{XZ2013} or the Bracken-Leander function $h(x)=x^{2^{2k}+2^{k}+1}$ \cite{CM2019} in (\ref{eq:piecewise}) on a subset $\mathbb{S}$ of $\mathbb{F}_{2^n}$.
In this paper, using the piecewise functions method, we present new differentially $4$ and $6$-uniform permutation functions by modifying the Dobbertin APN function $x^{d}$ with $d=2^{4k}+2^{3k}+2^{2k}+2^{k}-1$ (see \cite{Hans2001}) over a subfield of $\mathbb{F}_{2^{n}}$.

The paper is organized as follows. Section \ref{pre} gives some preliminaries on necessary concepts and related results.
In Section \ref{three}, the new differentially $4$ and $6$-uniform permutation functions are presented by modifying the Dobbertin APN function $x^{d}$ with $d=2^{4k}+2^{3k}+2^{2k}+2^{k}-1$ over a subfield of $\mathbb{F}_{2^{n}}$. In Section \ref{conclu}, the conclusion is given.

\section{Preliminaries }\label{pre}

In this section, we give some necessary definitions and results which will be used in this paper.

\begin{define}\label{trace}\emph{(\cite{LN97})}
Let $p$ be a prime and $n$, $m$ be positive integers. Set $q=p^n$. For $\alpha\in\mathbb{F}_{q^m}$, the trace ${\rm Tr}_{\mathbb{F}_{q^m}/{\mathbb{F}_{q}}}(\alpha)$
of $\alpha$ over $\mathbb{F}_{q}$ is defined by
$${\rm Tr}_{\mathbb{F}_{q^m}/{\mathbb{F}_{q}}}(\alpha) = \alpha+ \alpha^{q}+ x^{q^{2}}+ \cdots + \alpha^{q^{m-1}}.$$
Especially, when $n=1$, ${\rm Tr}_{\mathbb{F}_{q^m}/{\mathbb{F}_{p}}}(\alpha)$ is called the absolute trace of $\alpha$ and simply denoted by ${\rm Tr}(\alpha)$, where $\mathbb{F}_{p}$ is the prime subfield of $\mathbb{F}_{q^m}$.
\end{define}

\begin{define}  \emph{(\cite{Carlet2010})} \label{def:walsh}
For a function $f:\mathbb{F}_{2^{n}}\rightarrow \mathbb{F}_{2^{n}}$, the Walsh transform $f^{\mathcal{W}}:\mathbb{F}_{2^{n}}\times\mathbb{F}_{2^{n}}^{*}\rightarrow\mathbb{C}$ of $f$
is defined by \\
$$f^{\mathcal{W}}(u, v)=\sum\limits_{x\in\mathbb{F}_{2^n}}(-1)^{{\rm Tr}(ux+vf(x))},~~~u\in\mathbb{F}_{2^n}, v\in\mathbb{F}^{*}_{2^n}.$$ \\
The set $\mathcal{W}_{f} :=\{f^{\mathcal{W}}(u, v)~|~u\in\mathbb{F}_{2^n}, v\in\mathbb{F}^{*}_{2^n} \}$ is called the Walsh spectrum of $f$. The nonlinearity $\mathcal{NL}(f)$ of $f$ is defined as
$$\mathcal{NL}(f)=2^{n-1}-\frac{1}{2}\max\limits_{w\in\mathcal{W}_{f}} |w|.$$
\end{define}
For odd $n$, it has been shown that $\mathcal{NL}(f)$ is upper bounded by $2^{n-1} - 2^{\frac{n-1}{2}}$.
For even $n$, it is conjectured that $\mathcal{NL}(f)$ is upper bounded by $2^{n-1} - 2^{\frac{n}{2}}$.

\begin{define} \emph{(\cite{Carlet2010})}\label{def:deg}
For $f(x)=\sum\limits_{i=0}^{2^n-1}a_{i}x^{i}$, where $a_{i}\in \mathbb{F}_{2^n}$, we define the algebraic degree of $f(x)$ as $deg(f)=\max\limits_{i=0,\cdots, 2^n-1~ s.t. ~a_{i}\neq0}\omega_{2}(i)$, where $\omega_{2}(i)$ is the $2$-weight of the exponent $i$.
\end{define}

\begin{define} \emph{(\cite{Carlet2010})}
Let the functions $f,~g:~\mathbb{F}_{2^n}\rightarrow \mathbb{F}_{2^n}$. Then $f$ and $g$ are called

\emph{(i)} affine equivalent if $g=\mathcal{L}_{1} \circ f \circ \mathcal{L}_{2} $ for affine permutations  $\mathcal{L}_{1}$ and $\mathcal{L}_{2}$ over $\mathbb{F}_{2^n}$. Furthermore, $f,~g$ are called extended affine equivalent (EA-equivalent) if $g=\mathcal{L}_{1} \circ f \circ \mathcal{L}_{2} +\mathcal{L}_{3}$ for an affine function $\mathcal{L}_{3}$ over $\mathbb{F}_{2^n}$.

\emph{(ii)} Carlet-Charpin-Zinoviev equivalent (CCZ-equivalent) if the graphs of $f$ and $g$ are affine equivalent for affine permutation $\mathcal{A}\in\mathbb{F}_{2^n}\times \mathbb{F}_{2^n}$, that is $\mathcal{A}(G_{f}) =G_{g}$, where $G_{f} = \{(x,f(x))~:~x \in\mathbb{F}_{2^n} \}$ and $G_{g} = \{(x,g(x))~:~x \in\mathbb{F}_{2^n} \}$.
\end{define}

It is easy to see that affine equivalence is a particular case of EA-equivalence. EA-equivalence
implies CCZ-equivalence, and not vice versa. Every permutation is CCZ-equivalent to its inverse, and
the differential spectrum and the Walsh spectrum are CCZ-invariant. When a function is not affine, its algebraic degree is EA-invariant, but not CCZ-invariant. Refer to more details in \cite{Carlet2010}.

To investigate the solutions of a system of polynomial equations, the resultant of two polynomials is needed.

\begin{define}\emph{({\rm \cite[p.36]{LN97}})}
Let $q=p^r$, where $p$ is a prime and $r$ is a positive integer. Let $f(x)=a_{0}x^{n}+a_{1}x^{n-1}+\cdots +a_{n}\in \mathbb{F}_q[x]$ and $g(x)=b_{0}x^{m}+b_{1}x^{m-1}+\cdots+ b_{m}\in \mathbb{F}_q[x]$ be
two polynomials of  degree $n$ and $m$ respectively,  where $n, m \in \mathbb{N}$. Then the resultant $Res(f, g)$ of the two polynomials is defined by the determinant
\begin{eqnarray*}
Res(f, g)=
\left|\begin{array}{cccccccc}
   a_{0} &    a_{1}    & \cdots &   a_{n}  & 0  &    &\cdots & 0   \\
    0  &  a_{0} &    a_{1}    & \cdots &   a_{n}  & 0  & \cdots & 0 \\
    \vdots  &    &        &      &     &      &          &  \vdots    \\
    0  &  \cdots &   0   &  a_{0} &    a_{1}  &     &  \cdots &   a_{n} \\
    b_{0} &    b_{1}    & \cdots &      & b_{m}  & 0  & \cdots &  0     \\
      0   &    b_{0} &   b_{1}  & \cdots &    &  b_{m}  & \cdots &  0 \\
    \vdots &          &      &     &      &     &    &\vdots       \\
    0  &   \cdots   & 0 &  b_{0} &    b_{1} &    & \cdots &  b_{m}  \\
\end{array}\right|
& \begin{array}{l}
\left.\rule{0mm}{9.80mm}\right\}$m$~rows \\
\\\left.\rule{0mm}{9.80mm}\right\}$n$~ rows
\end{array}\\[0pt]
\end{eqnarray*}
of order $m+n$.
\end{define}
If the degree of $f$ is $Deg(f) = n$  (i.e., $a_{0}\neq 0$) and $f(x)=a_{0}(x -\alpha_{1})(x -\alpha_{2})\cdots (x -\alpha_{n})$ in
the splitting field of $f$ over $\mathbb{F}_q$, then $Res(f, g)$ is also given by the formula
\begin{eqnarray*}
Res(f, g)=a_{0}^{m}\prod_{i=1}^n g(\alpha_{i}).
\end{eqnarray*}
In this case, we have $Res(f, g) = 0$ if and only if $f$ and $g$ have a common root, which is the same as saying that $f$ and $g$ have a common
divisor in $\mathbb{F}_q[x]$ of positive degree.

For  two polynomials $F(x,y),\, G(x,y)\in \mathbb{F}_q[x,y]$ of positive degree in $y$, the resultant $Res(F,G,y)$ of $F$ and $G$ with respect to $y$ is the resultant of $F$ and $G$ when considered as polynomials in the single variable $y$. 
In this case, $Res(F,G,y)\in \mathbb{F}_q[x] \cap \langle F,G\rangle$, where $\langle F,G\rangle$ is in the ideal generated by $F$ and $G$. Thus any pair $(a,b)$ with $F(a,b)=G(a,b)=0$ is such that $Res(F,G,y)(a)=0$. For more information on resultants and elimination theory, the reader can refer to \cite{CLO2007}.

\section{Differentially $4$ and $6$-uniform functions}\label{three}

In this section, we present new differentially $4$ and $6$-uniform permutation functions by modifying the function values of $x^{d}$ with $d=2^{4k}+2^{3k}+2^{2k}+2^{k}-1$ over a subfield of $\mathbb{F}_{2^{n}}$.

We firstly determine the number of solutions for a special equation, which will be used to prove Theorem \ref{th:cons}.

\begin{lemma}\label{lem:eq}
Let $k$ be a positive integer and $n=5k$. For any $b\in\mathbb{F}_{2^k}^{*}$, the equation
\begin{eqnarray}\label{eq:spe-eq}
(x+1)^{2^{4k}+2^{3k}+2^{2k}+2^{k}-1} + x^{2^{4k}+2^{3k}+2^{2k}+2^{k}-1} = b
\end{eqnarray}
has no solution in $\mathbb{F}_{2^n}\setminus \mathbb{F}_{2^k}$.
\end{lemma}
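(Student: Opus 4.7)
The plan hinges on the observation that $d+2 = 2^{4k}+2^{3k}+2^{2k}+2^k+1 = \frac{2^{5k}-1}{2^k-1}$, which is precisely the exponent of the norm map $N\colon \mathbb{F}_{2^{5k}}^{*}\to\mathbb{F}_{2^k}^{*}$. Hence for every $x\in\mathbb{F}_{2^n}^{*}$ we can write $x^{d}=N(x)\cdot x^{-2}$, and this will turn the transcendental-looking identity into something polynomial of very low degree once one clears denominators.

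First I would dispose of the trivial cases $x=0,1$ (both lie in $\mathbb{F}_{2^k}$, so they are already excluded from the set where we seek solutions) and, for any hypothetical $x\in\mathbb{F}_{2^n}\setminus\mathbb{F}_{2^k}$, substitute $x^{d}=N(x)x^{-2}$ and $(x+1)^{d}=N(x+1)(x+1)^{-2}$ into (\ref{eq:spe-eq}). Setting $u=N(x)\in\mathbb{F}_{2^k}$ and $v=N(x+1)\in\mathbb{F}_{2^k}$ and multiplying through by $x^{2}(x+1)^{2}$ I obtain
\begin{equation*}
x^{2}v+(x+1)^{2}u = b\,x^{2}(x+1)^{2}.
\end{equation*}
Using $(x+1)^{2}=x^{2}+1$ in characteristic $2$ and rearranging, this collapses to
\begin{equation*}
b\,x^{4}+(b+u+v)\,x^{2}+u=0,
\end{equation*}
a quadratic in $z=x^{2}$ whose coefficients $b,\,b+u+v,\,u$ all lie in $\mathbb{F}_{2^k}$.

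Now comes the main step. Because the coefficients are fixed by the Frobenius $\phi\colon\alpha\mapsto\alpha^{2^{k}}$ generating $\mathrm{Gal}(\mathbb{F}_{2^{5k}}/\mathbb{F}_{2^k})$, applying $\phi^{i}$ for $i=0,1,2,3,4$ shows that $x^{2\cdot 2^{ik}}$ is a root of the quadratic $bz^{2}+(b+u+v)z+u=0$ for each $i$. Since $b\in\mathbb{F}_{2^k}^{*}$ the quadratic has at most two roots in $\mathbb{F}_{2^n}$, so among the five values $x^{2},x^{2^{k+1}},x^{2^{2k+1}},x^{2^{3k+1}},x^{2^{4k+1}}$ there must be coincidences. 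Squaring being a bijection in characteristic $2$, this forces at least two of the Galois conjugates $x,x^{2^{k}},x^{2^{2k}},x^{2^{3k}},x^{2^{4k}}$ to coincide. However, $[\mathbb{F}_{2^{5k}}:\mathbb{F}_{2^k}]=5$ is prime, so $\mathbb{F}_{2^{5k}}/\mathbb{F}_{2^k}$ has no proper intermediate subfield; the orbit of $x$ under $\phi$ has size either $1$ (when $x\in\mathbb{F}_{2^k}$) or $5$. Having any collision among the five conjugates therefore forces $x\in\mathbb{F}_{2^k}$, contradicting the hypothesis.

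The one subtlety to watch is the passage from the displayed rational equation to the quadratic, which implicitly requires $x\neq 0,1$; this is harmless since both excluded values lie in $\mathbb{F}_{2^k}$. The rest is mechanical, and I expect no hidden obstacle — the cleanness comes entirely from the fortuitous identity $d+2=\frac{2^{5k}-1}{2^k-1}$, which turns $x^{d}$ into $N(x)x^{-2}$ and thereby pushes all the ``heavy'' part of the exponent into $\mathbb{F}_{2^k}$, leaving only a degree-$4$ polynomial in $x$ to analyse.
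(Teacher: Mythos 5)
Your proposal is correct, and it takes a genuinely different and considerably cleaner route than the paper. The paper proceeds by introducing the five Galois conjugates $x, x^{2^k}, x^{2^{2k}}, x^{2^{3k}}, x^{2^{4k}}$ as separate variables, writing down the five conjugated copies of the equation, and then eliminating variables via successive resultant computations (which in practice must be carried out by a computer algebra system) until a contradiction is reached; along the way it must separately rule out the vanishing of various spurious factors such as $y+z+b+1$ and $y^2+y+b$. Your argument instead exploits the identity $d+2=\frac{2^{5k}-1}{2^k-1}$ to write $x^d=N(x)x^{-2}$ with $N$ the norm to $\mathbb{F}_{2^k}$, which collapses the equation (after clearing denominators, legitimate since $x\neq 0,1$) to $bz^2+(b+u+v)z+u=0$ in $z=x^2$ with all coefficients in $\mathbb{F}_{2^k}$; applying the Frobenius $\phi^i$, $i=0,\dots,4$, shows all five squared conjugates are roots of this fixed quadratic, and since $b\neq 0$ it has at most two roots, the pigeonhole principle plus injectivity of squaring forces two conjugates of $x$ to coincide, whence $x\in\mathbb{F}_{2^k}$ because $5$ is prime. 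Every step checks out (I verified the algebra $vx^2+u(x^2+1)=bx^4+bx^2$ giving $bx^4+(b+u+v)x^2+u=0$). What your approach buys is a fully human-verifiable proof with no symbolic elimination, and it makes transparent exactly where the hypotheses $b\in\mathbb{F}_{2^k}^{*}$ and $[\mathbb{F}_{2^n}:\mathbb{F}_{2^k}]=5$ prime are used; the paper's resultant method is more mechanical and would generalize more readily to exponents lacking such a fortuitous norm decomposition, but is opaque by comparison.
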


\begin{proof}
Assume that for a given $b\in\mathbb{F}_{2^k}^{*}$, Equation \eqref{eq:spe-eq}
has at least a solution $x$ in $\mathbb{F}_{2^n}\setminus \mathbb{F}_{2^k}$. 
Let $y=x^{2^{k}}$, $z=y^{2^{k}}$, $u=z^{2^{k}}$, $v=u^{2^k}$.
Then $x=v^{2^k}$ and for a given $b\in\mathbb{F}_{2^k}^{*}$, we study a system of five equations in $x$, $y$, $z$, $u$, $v$ obtained by
considering \eqref{eq:spe-eq}, \eqref{eq:spe-eq}$^{2^k}$, \eqref{eq:spe-eq}$^{2^{2k}}$, \eqref{eq:spe-eq}$^{2^{3k}}$, \eqref{eq:spe-eq}$^{2^{4k}}$.
Thus we have the equations system
\begin{subequations}  \label{eq:1.1}
\begin{empheq}[left=\empheqlbrace]{align}
  yzuv(x+1) + x(y+1)(z+1)(u+1)(v+1) + bx(x+1) =0,  \label{eq:1.1a} \\
  zuvx(y+1) + y(z+1)(u+1)(v+1)(x+1) + by(y+1) =0,  \label{eq:1.1b} \\
  uvxy(z+1) + z(u+1)(v+1)(x+1)(y+1) + bz(z+1) =0,  \label{eq:1.1c} \\
  vxyz(u+1) + u(v+1)(x+1)(y+1)(z+1) + bu(u+1) =0,  \label{eq:1.1d} \\
  xyzu(v+1) + v(x+1)(y+1)(z+1)(u+1) + bv(v+1) =0.  \label{eq:1.1e}
\end{empheq}
\end{subequations}
Computing the resultant of \eqref{eq:1.1a} and \eqref{eq:1.1d}, \eqref{eq:1.1b} and \eqref{eq:1.1d}, \eqref{eq:1.1c} and \eqref{eq:1.1d}, \eqref{eq:1.1e} and \eqref{eq:1.1d} with respect to $v$, we obtain
\begin{subequations}  \label{eq:1.2}
\begin{empheq}[left=\empheqlbrace]{align}
  &(x + u)^2(bxyz + bxyu + bxzu + bxu + y^2z^2 + y^2z + yz^2   \notag \\
  &  + byzu + byz + yz) = 0,  \label{eq:1.2b} \\
  &(y + u)^2(x^2z^2 + x^2z + bxyz + bxyu + xz^2 + bxzu + bxz   \notag \\
  &  + xz + byzu + byu) = 0,   \label{eq:1.2c} \\
  &(z + u)^2(x^2y^2 + x^2y + xy^2 + bxyz + bxyu + bxy + xy   \notag \\
  &  + bxzu + byzu + bzu) = 0,  \label{eq:1.2d}  \\
  & u(u + 1)(xy + xz + yz + x + y + z + b + 1)^2  \notag \\
  & \cdot (xyz + xyu + xzu + yzu + xu + yu + zu + bu^2 + bu + u)=0.  \label{eq:1.2a}
\end{empheq}
\end{subequations}
It is obvious that $u\neq0$ and $u\neq1$ in \eqref{eq:1.2a} since $x\in\mathbb{F}_{2^n}\setminus \mathbb{F}_{2^k}$. Moreover, we have $x, y, z\neq u$. Otherwise, we deduce $x\in\mathbb{F}_{2^k}$, which is a contradiction. Computing the resultant of \eqref{eq:1.2b}$/(x + u)^2$ and \eqref{eq:1.2a}$/u(u + 1)$, \eqref{eq:1.2c}$/(y + u)^2$ and \eqref{eq:1.2a}$/u(u + 1)$ , \eqref{eq:1.2d}$/(z + u)^2$ and \eqref{eq:1.2a}$/u(u + 1)$ with respect to $u$, we derive
\begin{subequations}  \label{eq:1.3}
\begin{empheq}[left=\empheqlbrace]{align}
  & bxyz(x+1)(y+1)(z+1)(y + z + b + 1)^2  \notag \\
  & \cdot (xy + xz + yz +  x + y + z + b + 1)^2 = 0,  \label{eq:1.3a} \\
  & bxyz(x+1)(y+1)(z+1)(x + z + b + 1)^2  \notag \\
  & \cdot (xy + xz + yz +  x + y + z + b + 1)^2 = 0,  \label{eq:1.3b} \\
  & bxyz(x+1)(y+1)(z+1)(x + y + b + 1)^2  \notag \\
  & \cdot (xy + xz + yz +  x + y + z + b + 1)^2 = 0.   \label{eq:1.3c}
\end{empheq}
\end{subequations}
Assume $x\in\mathbb{F}_{2^n}\setminus \mathbb{F}_{2^k}$, we have $x,y,z\neq0,1$. Suppose that $y + z + b + 1=0$ in (\ref{eq:1.3a}). Then $y + z \in\mathbb{F}_{2^k}$ since $b\in\mathbb{F}_{2^k}$. Raising $y + z$ to the $2^{3k}$-th power, we have $y + z = z+u$. Hence $y=u$, which means $x\in\mathbb{F}_{2^{2k}}$. This yields $x\in\mathbb{F}_{2^{2k}}\cap \mathbb{F}_{2^{5k}}=\mathbb{F}_{2^k}$, which is a contradiction. Similarly, we obtain $x + z + b + 1\neq0$ and  $x + y + b + 1\neq0$. Thus from the equation system \eqref{eq:1.3} we have
\begin{eqnarray}\label{eq:1.3-1}
 xy + xz + yz +  x + y + z + b + 1=0.
\end{eqnarray}
For $b\in\mathbb{F}_{2^k}^{*}$, raising both sides of \eqref{eq:1.3-1} to the $2^{k}$-th power, we obtain
\begin{eqnarray}\label{eq:1.4}
yz + yu + zu +  y + z + u + b + 1 = 0.
\end{eqnarray}
Computing the resultant of \eqref{eq:1.3-1} and \eqref{eq:1.4} with respect to $z$, we derive
\begin{eqnarray}  \label{eq:1.5}
  (x + u)(y^2 + y + b) = 0.
\end{eqnarray}
Recall that $x\neq u$. If $y^2 + y + b = 0$, then we have $y^2 + y \in\mathbb{F}_{2^k}$ since $b\in\mathbb{F}_{2^k}$. This yields $y^2 + y =y^{2^{k+1}}+y^{2^k}$, and we obtain
$y+y^{2^k} \in\mathbb{F}_{2}\subset\mathbb{F}_{2^k}$. We further derive $y+y^{2^k} = y^{2^k}+y^{2^{2k}}$, which means  $y\in\mathbb{F}_{2^{2k}}$. Hence $y\in\mathbb{F}_{2^{2k}}\cap \mathbb{F}_{2^{5k}}=\mathbb{F}_{2^k}$, which is  a contradiction. Thus \eqref{eq:1.5} has no solution.
Therefore we prove that \eqref{eq:spe-eq} has no solution in $\mathbb{F}_{2^n}\setminus \mathbb{F}_{2^k}$ for any $b\in\mathbb{F}_{2^k}^{*}$.
\end{proof}

\begin{theorem}\label{th:cons}
Let $k$ be an odd integer and $n=5k$. Define the function $ f(x)= g(x) + (g(x)+x^{d})\big(x+x^{2^{k}}\big)^{2^{n}-1}$ over $\mathbb{F}_{2^n}$,
where $d=2^{4k}+2^{3k}+2^{2k}+2^{k}-1$. Then we have the following:

\emph{(i)} if $g(x)$ is an APN permutation or a differentially $4$-uniform permutation function over $\mathbb{F}_{2^k}$, then $f(x)$ is a differentially $4$-uniform permutation function over $\mathbb{F}_{2^n}$;

\emph{(ii)} if $g(x)$ is a differentially $6$-uniform permutation function over $\mathbb{F}_{2^k}$, then $f(x)$ is also a differentially $6$-uniform permutation function over $\mathbb{F}_{2^n}$.
\end{theorem}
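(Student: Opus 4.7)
The plan starts from observing that $(x+x^{2^k})^{2^n-1}$ is the indicator of $x\notin\mathbb{F}_{2^k}$, so
\[
f(x)=\begin{cases}g(x), & x\in\mathbb{F}_{2^k},\\ x^d, & x\in\mathbb{F}_{2^n}\setminus\mathbb{F}_{2^k}.\end{cases}
\]
For the permutation property I would verify $\gcd(d,2^n-1)=1$ when $k$ is odd (so $x^d$ is a bijection of $\mathbb{F}_{2^n}$), and note that for $x\in\mathbb{F}_{2^k}$ one has $x^d=x^3$ (since $x^{2^k}=x$), with $x\mapsto x^3$ bijective on $\mathbb{F}_{2^k}$ because $\gcd(3,2^k-1)=1$ for odd $k$. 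Thus $x^d$ stabilises both $\mathbb{F}_{2^k}$ and its complement; together with the hypothesis that $g$ permutes $\mathbb{F}_{2^k}$, this makes $f$ a permutation of $\mathbb{F}_{2^n}$.

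To bound $\delta_f$ I would split on whether $a\in\mathbb{F}_{2^k}^*$. In the first case, $x$ and $x+a$ lie in the same $\mathbb{F}_{2^k}$-coset, so the piece $x\in\mathbb{F}_{2^k}$ contributes at most $\delta_g$ solutions of $g(x+a)+g(x)=b$ (with $b\in\mathbb{F}_{2^k}$ automatically), while the piece $x\notin\mathbb{F}_{2^k}$ contributes solutions of $(x+a)^d+x^d=b$, bounded by $2$ via the APN property of $x^d$. The substitution $x=ay$, together with $a^d=a^3\in\mathbb{F}_{2^k}^*$, recasts the outside equation as $(y+1)^d+y^d=b/a^3$; for $b\in\mathbb{F}_{2^k}^*$ Lemma \ref{lem:eq} rules out solutions with $y\notin\mathbb{F}_{2^k}$, and for $b\notin\mathbb{F}_{2^k}$ the inside piece contributes nothing. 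So Case~A gives at most $\max(\delta_g,2)$. In the second case ($a\notin\mathbb{F}_{2^k}$), no $x$ can have both $x,x+a\in\mathbb{F}_{2^k}$; the two ``cross'' regimes are paired by $x\leftrightarrow x+a$ and contribute the same number $N$ of solutions, and the outside-outside regime contributes at most $2$ by APN.

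The crux is showing $N\leq 1$. Suppose $x_1\neq x_2\in\mathbb{F}_{2^k}$ both satisfy $(x_i+a)^d+g(x_i)=b$; then $(x_1+a)^d+(x_2+a)^d=g(x_1)+g(x_2)\in\mathbb{F}_{2^k}$. Writing $u_i=x_i+a\notin\mathbb{F}_{2^k}$ and $c=x_1+x_2\in\mathbb{F}_{2^k}^*$, and rescaling $u_1=ct$ (so $t\notin\mathbb{F}_{2^k}$), this becomes $c^3\bigl((t+1)^d+t^d\bigr)\in\mathbb{F}_{2^k}$, hence $(t+1)^d+t^d\in\mathbb{F}_{2^k}$. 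Lemma \ref{lem:eq} forbids this element from lying in $\mathbb{F}_{2^k}^*$, while $(t+1)^d=t^d$ is ruled out by $x^d$ being a permutation; contradiction, so $N\leq 1$ and Case~B contributes at most $4$. Altogether $\delta_f\leq\max(\delta_g,4)$, which yields both (i) and (ii). I expect the main obstacle to be precisely this $N\leq 1$ reduction: the rescaling $u_1=ct$ is what aligns the derived condition with the specific normal form ``$(x+1)^d+x^d$'' required by Lemma \ref{lem:eq}; the remainder is bookkeeping on the positions of $x$, $x+a$, and $b$ relative to $\mathbb{F}_{2^k}$.
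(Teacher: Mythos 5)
Your proposal follows essentially the same route as the paper: the same permutation argument, the same case split on $a\in\mathbb{F}_{2^k}^*$ versus $a\notin\mathbb{F}_{2^k}$, and the same reduction of the two-solution scenario in the cross regime to Lemma \ref{lem:eq} via the rescaling $u_1=ct$ (the paper phrases this as $|(a+\mathbb{F}_{2^k})^d\cap(b+\mathbb{F}_{2^k})|\le 1$, but it is the identical contradiction). Your upper bound $\delta_f\le\max(\delta_g,4)$ is correctly established.

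One small gap: the theorem asserts the differential uniformity is \emph{exactly} $4$ (resp.\ $6$), so you also need attainment. For $\delta_g\in\{4,6\}$ this is immediate from your Case~A: pick $a\in\mathbb{F}_{2^k}^*$, $b\in\mathbb{F}_{2^k}$ realizing $\delta_g$ for $g$; those $\delta_g$ solutions lie in $\mathbb{F}_{2^k}$ and you have already shown the outer piece contributes nothing for such $b$, so $\delta_f(a,b)=\delta_g$ — worth stating explicitly. For APN $g$, however, your argument only yields $\delta_f\in\{2,4\}$ and does not rule out that $f$ is APN; the paper closes this case by invoking Proposition 3.1 of Calderini \cite{CM2019}, and your write-up would need either that citation or a separate argument exhibiting a derivative value hit four times.
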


\begin{proof}
Firstly, we study the permutation property of $f(x)$. If $k$ is odd, then $\gcd(d, 2^n-1)=1$, which implies that $x^d$ is a permutation function over $\mathbb{F}_{2^n}$. For odd $k$, we have $x^d=x^3$ over $\mathbb{F}_{2^k}$ and $\gcd(3, 2^k-1)=1$. Thus $x^d$ is a permutation function over $\mathbb{F}_{2^k}$ and the image set $Im(x^d)$ of $x^d$ over $\mathbb{F}_{2^k}$ equals $\mathbb{F}_{2^k}$. Therefore, $x^d$ is a permutation function over  the complement set $\mathbb{F}_{2^n}\setminus \mathbb{F}_{2^{k}}$. Since $g(x)$ is a permutation function over $\mathbb{F}_{2^k}$, we obtain that $f(x)$ is a permutation function over $\mathbb{F}_{2^n}$.

Secondly, we calculate the differential uniformity of $f(x)$.
According to \cite[Proposition 3.1]{CM2019}, if $g(x)$ is an APN permutation, then we obtain that $f(x)$ is a differentially $4$-uniform permutation function over $\mathbb{F}_{2^n}$. For the cases (i) and (ii) when $g(x)$ is a differentially $4$ and $6$-uniform permutation function over $\mathbb{F}_{2^k}$, according to \cite[Theorem 3.1]{CM2019}, we have $\delta_{f}\leq \delta_{g}$, Thus it is needed to prove only that we have cases with $\delta_g$ solutions. 
In the following, we prove that there exists a pair $a, b \in\mathbb{F}_{2^n}$ and $a\neq 0$ such that the equation 
\begin{eqnarray}\label{eq:yuan}
f(x+a)+f(x)& = &
g(x+a)+[g(x+a)+(x+a)^{d}]((x+a)+(x+a)^{2^{k}})^{2^n-1}  \nonumber \\
&& + g(x) + (g(x)+x^{d})\big(x+x^{2^{k}}\big)^{2^{n}-1} =b
\end{eqnarray}
has exactly $\delta_{g}$ solutions.
Next we discuss separately the number of solutions of \eqref{eq:yuan} for the two distinct cases $a\in \mathbb{F}_{2^{k}}^{*}$ and $a\notin \mathbb{F}_{2^{k}}^{*}$.

(I) Suppose that $a\in \mathbb{F}_{2^{k}}^{*}$. If $x\in \mathbb{F}_{2^{k}}$ and $x+a\in \mathbb{F}_{2^{k}}$, then it follows from \eqref{eq:yuan} that
\begin{eqnarray*}
g(x+a)+g(x)=b.
\end{eqnarray*}
The equation has exactly $\delta_{g}$ solutions for some $b\in \mathbb{F}_{2^{k}}$ since $g(x)$ is a differentially $\delta_{g}$-uniform function over $\mathbb{F}_{2^k}$ or no solution for $b\in \mathbb{F}_{2^n}\setminus\mathbb{F}_{2^k}$.

If $x\in \mathbb{F}_{2^n}\setminus\mathbb{F}_{2^k}$ and $x+a\in \mathbb{F}_{2^n}\setminus\mathbb{F}_{2^k}$, then from \eqref{eq:yuan} we have
\begin{eqnarray}\label{eq:d}
(x+a)^{d} + x^{d}=b.
\end{eqnarray}
When $b\in \mathbb{F}_{2^{k}}$, substituting $x$ with $ay$ and dividing by $a^d$, we deduce $(y+1)^{d} + y^{d} = \frac{b}{a^d}$, where $\frac{b}{a^d}\in \mathbb{F}_{2^{k}}$. By Lemma \ref{lem:eq}, this equation has no solution $y\in \mathbb{F}_{2^n}\setminus\mathbb{F}_{2^k}$. Thus \eqref{eq:d} has no solution $x\in \mathbb{F}_{2^n}\setminus\mathbb{F}_{2^k}$.
When $b\in \mathbb{F}_{2^n}\setminus\mathbb{F}_{2^k}$,  \eqref{eq:d} has exactly two solutions for some $b\in \mathbb{F}_{2^n}\setminus\mathbb{F}_{2^k}$ since $x^d$ is an APN function.

Thus in the case $a\in \mathbb{F}_{2^{k}}^{*}$, \eqref{eq:yuan} has exactly $\delta_{g}$ solutions if $g(x)$ is a differentially $4$ and $6$-uniform function for some $b\in \mathbb{F}_{2^n}$.

(II) Suppose that $a\in \mathbb{F}_{2^n}\setminus\mathbb{F}_{2^k}$.
If $x\in \mathbb{F}_{2^n}\setminus\mathbb{F}_{2^k}$ and $x+a\in \mathbb{F}_{2^n}\setminus\mathbb{F}_{2^k}$, then we obtain that
\begin{eqnarray*}
(x+a)^{d} + x^{d}=b
\end{eqnarray*}
has exactly two solutions for some $b\in \mathbb{F}_{2^n}$ since $x^d$ is an APN function.

If $x\in \mathbb{F}_{2^n}\setminus\mathbb{F}_{2^k}$ and $x+a\in \mathbb{F}_{2^k}$, then we have
\begin{eqnarray}\label{eq:d1.5}
g(x+a) + x^{d}=b.
\end{eqnarray}
Raising both sides of \eqref{eq:d1.5} to the $2^{k}$-th power and then adding \eqref{eq:d1.5}, we derive
\begin{eqnarray}\label{eq:d1.6}
x^{d\cdot 2^k} + x^{d}=b + b^{2^k}.
\end{eqnarray}
Let $z=x^{d}$. Since we have shown that $x^{d}$ is a permutation over $x\in \mathbb{F}_{2^n}\setminus\mathbb{F}_{2^k}$, we obtain one $z$ by the $x$. Then \eqref{eq:d1.6} can be written as
\begin{eqnarray}\label{eq:d1.7}
z^{2^k} + z = b + b^{2^k}.
\end{eqnarray}
The solutions of \eqref{eq:d1.7} are in the coset $b+\mathbb{F}_{2^k}$. Since $x+a\in \mathbb{F}_{2^k}$, this yields $x\in a+\mathbb{F}_{2^k}$. Thus it suffices to compute the number of solutions $x\in (a+\mathbb{F}_{2^k})^d \cap (b+\mathbb{F}_{2^k})$, where $(a+\mathbb{F}_{2^k})^d = \{ x^d : x\in a+\mathbb{F}_{2^k} \}$. Assume that
$| (a+\mathbb{F}_{2^k})^d \cap (b+\mathbb{F}_{2^k})| \geq2$. Then there exist $u_{1}, u_{2}, v_{1}, v_{2}\in \mathbb{F}_{2^k}$ such that $(a+v_{1})^d = b+u_{1}$, $(a+v_{2})^d = b+u_{2}$ and $u_{1}\neq u_{2}$, $v_{1}\neq v_{2}$. Hence $$(a+v_{1})^d +(a+v_{2})^d = u_{1}+u_{2}.$$
Let us denote $x_{1}=a+v_{1}$, $a_{1}=v_{1}+v_{2}$. Then we obtain $x_{1}\in \mathbb{F}_{2^n}\setminus\mathbb{F}_{2^k}$ and
$$x_{1}^d +(x_{1}+a_{1})^d = u_{1}+u_{2}.$$
Since $a_{1}\in \mathbb{F}_{2^k}$ and by Lemma \ref{lem:eq}, this equation has no solution in $\mathbb{F}_{2^n}\setminus\mathbb{F}_{2^k}$, which is a contradiction. Thus $| (a+\mathbb{F}_{2^k})^d \cap (b+\mathbb{F}_{2^k})| \leq1$, which means that \eqref{eq:yuan} has at most one solution. Similarly, we can prove that \eqref{eq:yuan} has at most one solution in the case $x\in \mathbb{F}_{2^k}$ and $x+a\in \mathbb{F}_{2^n}\setminus\mathbb{F}_{2^k}$. Therefore for the case $a\in \mathbb{F}_{2^n}\setminus\mathbb{F}_{2^k}$, we obtain that \eqref{eq:yuan} has at most $4=\delta_{x^{d}}+2$ solutions for any $b\in \mathbb{F}_{2^n}$.

From the above discussion, if $g(x)$ is a differentially $4$-uniform function, then \eqref{eq:yuan} has exactly $4$ solutions for some $a, b \in\mathbb{F}_{2^n}$ and $a\neq 0$. Thus $f(x)$ is a differentially $4$-uniform function.
If $g(x)$ is a differentially $6$-uniform function, then \eqref{eq:yuan} has exactly $6$ solutions for some $a, b \in\mathbb{F}_{2^n}$ and $a\neq 0$. Therefore $f(x)$ is a differentially $6$-uniform function.
\end{proof}

\begin{remark}
In Theorem \ref{th:cons}, if $k$ is even, then $x^d$ is an APN non-permutation function \cite{Hans2001} . Thus $f(x)$ is not a permutation over $\mathbb{F}_{2^n}$, but the differential uniformity of $f(x)$ is invariant.
If $k$ is odd and $g(x)$ is not a permutation over $\mathbb{F}_{2^k}$, then $f(x)$ is not a permutation function, but the differential uniformity of $f(x)$ is invariant.  Therefore in Theorem \ref{th:cons}, we always assume that $k$ is odd and $g(x)$ is a permutation function over $\mathbb{F}_{2^k}$.
\end{remark}

\begin{corollary}\label{cor:power}
Let $k$ be an odd integer, $m$ be an integer, $n=5k$ and $\gcd(k, m)=1$. Define the function $g(x)=\mathcal{L}_{1} \circ x^{2^m-1}\circ \mathcal{L}_{2}$, where $\mathcal{L}_{1}, ~\mathcal{L}_{2}$ are affine permutations over $\mathbb{F}_{2^k}$. Define the function $ f(x)= g(x) + (g(x)+x^{d})\big(x+x^{2^{k}}\big)^{2^{n}-1}$ over $\mathbb{F}_{2^n}$,
where $d=2^{4k}+2^{3k}+2^{2k}+2^{k}-1$. Then the differential uniformity of $f(x)$ satisfies the following:

\emph{(i)} if $x^{2^m-1}$ is an APN or a differentially $4$-uniform function, then $f(x)$ is a differentially $4$-uniform permutation function;

\emph{(ii)} if $x^{2^m-1}$ is a differentially $6$-uniform function, then $f(x)$ is a differentially $6$-uniform permutation function.
\end{corollary}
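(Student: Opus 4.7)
The plan is to obtain Corollary \ref{cor:power} as a direct application of Theorem \ref{th:cons} to the specific choice $g(x) = \mathcal{L}_1 \circ x^{2^m-1} \circ \mathcal{L}_2$. To invoke Theorem \ref{th:cons}, I must verify two things about $g$: that it is a permutation of $\mathbb{F}_{2^k}$, and that its differential uniformity on $\mathbb{F}_{2^k}$ matches that of $x^{2^m-1}$ on $\mathbb{F}_{2^k}$.

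For the permutation property, the affine maps $\mathcal{L}_1$ and $\mathcal{L}_2$ are permutations of $\mathbb{F}_{2^k}$ by hypothesis, so $g$ is a permutation of $\mathbb{F}_{2^k}$ iff the middle map $x \mapsto x^{2^m-1}$ is. The latter is a permutation of $\mathbb{F}_{2^k}$ precisely when $\gcd(2^m-1,\, 2^k-1)=1$. Using the standard identity $\gcd(2^m-1,\, 2^k-1) = 2^{\gcd(m,k)}-1$ together with the hypothesis $\gcd(m,k)=1$, this gcd equals $2^{1}-1=1$, so $x^{2^m-1}$, and therefore $g$, is indeed a permutation of $\mathbb{F}_{2^k}$.

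For the differential uniformity, $g$ is obtained from $x^{2^m-1}$ by pre- and post-composition with affine permutations, so $g$ and $x^{2^m-1}$ are affine equivalent. As recalled after the definition of CCZ-equivalence, the differential spectrum is a CCZ-invariant, and in particular an affine invariant. Hence the differential uniformity of $g$ over $\mathbb{F}_{2^k}$ coincides with that of $x^{2^m-1}$ over $\mathbb{F}_{2^k}$. Consequently, if $x^{2^m-1}$ is APN or differentially $4$-uniform on $\mathbb{F}_{2^k}$, so is $g$, and Theorem \ref{th:cons}(i) yields that $f$ is a differentially $4$-uniform permutation of $\mathbb{F}_{2^n}$; likewise, if $x^{2^m-1}$ is differentially $6$-uniform, Theorem \ref{th:cons}(ii) gives that $f$ is a differentially $6$-uniform permutation of $\mathbb{F}_{2^n}$.

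There is no serious obstacle in this argument, since all the substantive work—in particular Lemma \ref{lem:eq} and the case split according to whether $a\in\mathbb{F}_{2^k}^{*}$ or $a\in\mathbb{F}_{2^n}\setminus\mathbb{F}_{2^k}$—has already been absorbed into Theorem \ref{th:cons}. The only two small facts required, the gcd identity $\gcd(2^m-1,\, 2^k-1)=2^{\gcd(m,k)}-1$ and the invariance of differential uniformity under affine equivalence, are standard and need no additional justification.
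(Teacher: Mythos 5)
Your proposal is correct and matches the paper's treatment: the paper states Corollary \ref{cor:power} without proof, as an immediate application of Theorem \ref{th:cons}, and the two facts you supply --- that $\gcd(2^m-1,2^k-1)=2^{\gcd(m,k)}-1=1$ makes $g$ a permutation of $\mathbb{F}_{2^k}$, and that affine equivalence preserves the differential uniformity --- are exactly the details needed to justify invoking that theorem.
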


For $m=k-1, \frac{k+1}{2}, 2$, the power functions $x^{2^m-1}$ are APN functions over $\mathbb{F}_{2^k}$. Moreover, the function $x^{2^{\frac{k+1}{2}}-1}$ is the inverse of a quadratic function. For more details the readers can refer to \cite{BCC2011}. Based on Corollary \ref{cor:power}, we directly obtain the following conclusion.

\begin{corollary}\label{cor:diff4}
Let $k$ be an odd integer and $n=5k$. Define the APN functions $g(x)=\mathcal{L}_{1} \circ x^{2^m-1}\circ \mathcal{L}_{2}$ with $m=k-1, \frac{k+1}{2}, 2$, where $\mathcal{L}_{1}, ~\mathcal{L}_{2}$ are affine permutations over $\mathbb{F}_{2^k}$.  Then
the functions $f(x)$ are differentially $4$-uniform permutations over $\mathbb{F}_{2^n}$. 
\end{corollary}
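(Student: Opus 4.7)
The plan is to reduce Corollary \ref{cor:diff4} to a direct invocation of Corollary \ref{cor:power}(i), so the task is almost entirely bookkeeping: I need to verify the hypotheses of that corollary for each of the three listed exponents.

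First I would check the coprimality condition $\gcd(k,m)=1$ for $k$ odd and $m\in\{k-1,\tfrac{k+1}{2},2\}$. For $m=2$ this is immediate since $k$ is odd; for $m=k-1$ we have $\gcd(k,k-1)=1$ trivially; and for $m=\tfrac{k+1}{2}$, since $k$ is odd we have $\gcd(k,k+1)=1$, so any divisor of $\gcd\bigl(k,\tfrac{k+1}{2}\bigr)$ also divides $\gcd(k,k+1)=1$. This confirms that Corollary \ref{cor:power} applies in each case.

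Second, I would invoke the classical result (as cited in \cite{BCC2011} in the paragraph preceding the corollary) that the power functions $x^{2^m-1}$ over $\mathbb{F}_{2^k}$ are APN precisely for these three exponents when $k$ is odd: $m=2$ gives (up to equivalence) the Gold APN function, $m=k-1$ gives the inverse of the Gold function in the Dobbertin sense, and $m=\tfrac{k+1}{2}$ gives the inverse of a quadratic APN. Since affine equivalence preserves APN-ness and the permutation property, $g(x)=\mathcal{L}_1\circ x^{2^m-1}\circ\mathcal{L}_2$ is an APN permutation of $\mathbb{F}_{2^k}$ for each of the three choices.

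Finally, having verified both hypotheses of Corollary \ref{cor:power}(i)—namely $\gcd(k,m)=1$ and $x^{2^m-1}$ being APN—I would conclude immediately that the associated function $f(x)$ is a differentially $4$-uniform permutation over $\mathbb{F}_{2^{n}}$. There is no real obstacle here; the only place one has to be slightly careful is the coprimality check for $m=\tfrac{k+1}{2}$, which requires the short parity argument above, and the appeal to the known APN classification of $x^{2^m-1}$ for the three listed exponents.
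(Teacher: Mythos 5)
Your proposal is correct and matches the paper's (implicit) argument: the paper likewise treats this corollary as an immediate consequence of Corollary \ref{cor:power}(i) together with the cited fact from \cite{BCC2011} that $x^{2^m-1}$ is APN over $\mathbb{F}_{2^k}$ for $m=k-1,\tfrac{k+1}{2},2$, and your explicit verification of $\gcd(k,m)=1$ in each case is the only detail the paper leaves unstated. The only quibble is your informal labelling of the three APN families (e.g.\ $m=k-1$ corresponds to the inverse function rather than to an inverse of the Gold function), but this plays no role in the argument.
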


In the following, we discuss the algebraic degree and the nonlinearity of the function $f(x)$.

In \cite{CM2019} Calderini gave a necessary and sufficient condition for having $deg(f)=n-1$ if $f$ is a permutation in Lemma 4.3.
Furthermore, in \cite[Remark 4.3]{CM2019} the author can generalize Lemma 4.3 also for the case $deg(f)=n-t$, using monomials $x^{s}$ with $\omega_{2}(s)=t$.
Therefore we introduce the generalized result in the following.

\begin{lemma} \emph{(\cite{CM2019})} ~\label{lem:deg}
Let $f$ be a function defined over $\mathbb{F}_{2^n}$. Then $f$ in its polynomial representation
has a term of algebraic degree $n-t$ if and only if there exists a monomial $x^{s}$ with $\omega_{2}(s)=t$ such
that $\sum\limits_{x\in \mathbb{F}_{2^n}}f(x)x^{s}\neq0$, where $\omega_{2}(s)$ is the $2$-weight of the exponent $s$.
\end{lemma}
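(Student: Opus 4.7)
The plan is to reduce the claim to a direct evaluation of $\sum_{x\in\mathbb{F}_{2^n}} f(x)x^s$ in terms of the polynomial coefficients of $f$, and then to translate the $2$-weight condition on $s$ to one on the extracted exponent via bitwise complementation modulo $2^n-1$.

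Writing $f(x)=\sum_{i=0}^{2^n-1}a_ix^i$, the statement ``$f$ has a term of algebraic degree $n-t$'' is, by Definition~\ref{def:deg}, equivalent to the existence of some $i$ with $\omega_2(i)=n-t$ and $a_i\neq 0$. The first step is to expand
\[
\sum_{x\in\mathbb{F}_{2^n}} f(x)x^s \;=\; \sum_{i=0}^{2^n-1} a_i\sum_{x\in\mathbb{F}_{2^n}} x^{i+s},
\]
and to invoke the standard power-sum identity over $\mathbb{F}_{2^n}$: for an integer $j\ge 0$, the value $\sum_{x\in\mathbb{F}_{2^n}} x^j$ equals $1$ when $j$ is a positive multiple of $2^n-1$ and vanishes otherwise (with the convention $0^0=1$). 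Since $0\le i+s\le 2(2^n-1)$, the generic contributing index is $i=2^n-1-s$; the boundary case $i+s=2(2^n-1)$ can arise only at $s=2^n-1$ and will be treated separately. This yields the key identity
\[
\sum_{x\in\mathbb{F}_{2^n}} f(x)x^s \;=\; a_{2^n-1-s}.
\]

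The concluding step is a bitwise observation: viewing $s\in\{0,1,\ldots,2^n-1\}$ as an $n$-bit integer, $2^n-1-s$ is the bitwise complement of $s$, so $\omega_2(2^n-1-s)=n-\omega_2(s)$. Thus $s\mapsto 2^n-1-s$ is a weight-complementing involution, and the existence of $s$ with $\omega_2(s)=t$ and $\sum_x f(x)x^s\neq 0$ is equivalent to the existence of $i=2^n-1-s$ with $\omega_2(i)=n-t$ and $a_i\neq 0$, which proves the lemma. The main technical obstacle I anticipate is the careful bookkeeping of the power-sum identity around the boundary exponents (the $j=0$ term with the convention $0^0=1$, and the exceptional $i+s=2(2^n-1)$ contribution at $s=2^n-1$, which introduces an extra $a_{2^n-1}$ term one must account for); once those conventions are fixed, the weight-complementation argument carries the lemma through immediately.
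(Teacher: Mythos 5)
Your argument is correct, and it is the standard proof of this fact; note that the paper itself offers no proof at all --- the lemma is imported verbatim from \cite{CM2019} (Calderini's Lemma~4.3 and Remark~4.3), so there is nothing to compare against except your own reasoning. The core computation checks out: $\sum_{x\in\mathbb{F}_{2^n}}x^{j}$ equals $1$ exactly when $j$ is a positive multiple of $2^n-1$ and is $0$ otherwise (including $j=0$, since $2^n\equiv 0$), so for $0\le s\le 2^n-2$ one gets $\sum_{x}f(x)x^{s}=a_{2^n-1-s}$, and the complementation identity $\omega_2(2^n-1-s)=n-\omega_2(s)$ turns the existence statement on $s$ with $\omega_2(s)=t$ into the existence of a coefficient $a_i\neq 0$ with $\omega_2(i)=n-t$, which is precisely Definition~\ref{def:deg}.

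The one point you flagged but did not actually resolve is the boundary case $s=2^n-1$, and it is worth being precise there: for that $s$ both $i=0$ and $i=2^n-1$ contribute, so $\sum_{x}f(x)x^{2^n-1}=a_0+a_{2^n-1}$, and the biconditional genuinely fails for $t=n$ (e.g., $f(x)=x^{2^n-1}$ has no constant term yet the sum equals $1$). So the lemma should be read as holding for $0\le t\le n-1$; this costs nothing in the paper, where it is applied with $t=k-m\le k-2<n$, but your closing claim that ``the weight-complementation argument carries the lemma through'' once the conventions are fixed is too optimistic for $t=n$ --- the correct fix is to exclude that value of $t$, not to absorb the extra term.
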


\begin{proposition}~\label{prop:deg}
Let $k>3$ be an odd integer and $n=5k$. Define the function $g(x)=\mathcal{L}_{1} \circ x^{2^m-1}\circ \mathcal{L}_{2}$, where $\mathcal{L}_{1}, ~\mathcal{L}_{2}$ are affine permutations over $\mathbb{F}_{2^k}$. Define the function $ f(x)= g(x) + (g(x)+x^{d})\big(x+x^{2^{k}}\big)^{2^{n}-1}$ over $\mathbb{F}_{2^n}$,
where $d=2^{4k}+2^{3k}+2^{2k}+2^{k}-1$.  Then the algebraic degree of $f(x)$ satisfies the following:

\emph{(i)} if $m=k-1$, then $deg(f)=n-1$;

\emph{(ii)} if $m=\frac{k+1}{2}$, then $\frac{9k+1}{2}\leq deg(f)\leq n-1$;

\emph{(iii)} if $m=2$, and there exist the affine permutations $\mathcal{L}_{1}, ~\mathcal{L}_{2}$ over $\mathbb{F}_{2^k}$
such that $deg(g+x^3)=2$, 
then $4k+2\leq deg(f)\leq n-1$.
\end{proposition}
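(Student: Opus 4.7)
The plan is to combine the piecewise description of $f$ with two applications of Lemma \ref{lem:deg}, one over $\mathbb{F}_{2^n}$ and one over the subfield $\mathbb{F}_{2^k}$. Since $(x+x^{2^k})^{2^n-1}$ equals $0$ on $\mathbb{F}_{2^k}$ and $1$ elsewhere, the function has the piecewise form
\[
f(x)=\begin{cases} g(x), & x\in \mathbb{F}_{2^k},\\ x^d, & x\in \mathbb{F}_{2^n}\setminus \mathbb{F}_{2^k}. \end{cases}
\]
Because $x^{2^k}=x$ on $\mathbb{F}_{2^k}$, the exponent collapses there: $x^d=x^{2^{4k}+2^{3k}+2^{2k}+2^k-1}=x^3$. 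Setting $h(x):=g(x)+x^3$ and splitting the sum over $\mathbb{F}_{2^n}=\mathbb{F}_{2^k}\sqcup(\mathbb{F}_{2^n}\setminus\mathbb{F}_{2^k})$ in characteristic $2$ gives, for every integer $s\ge 0$,
\[
S(s):=\sum_{x\in \mathbb{F}_{2^n}} f(x)\,x^s \;=\; \sum_{x\in \mathbb{F}_{2^k}} h(x)\,x^s \;+\; \sum_{x\in \mathbb{F}_{2^n}} x^{d+s}.
\]

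The second power sum vanishes unless $d+s$ is a positive multiple of $2^n-1$; since $0<d<2^n-1$ this happens only at the single exponent $s=2^n-1-d$, whose $2$-weight is $n-\omega_2(d)=5k-(k+3)=4k-3$. Thus, for every $s$ with $\omega_2(s)<4k-3$ one has $S(s)=\sum_{x\in \mathbb{F}_{2^k}} h(x)\,x^s$. Applying Lemma \ref{lem:deg} to $h$ on the subfield $\mathbb{F}_{2^k}$: if $\deg h=D$, then there exists $s\in\{1,\dots,2^k-1\}$ with $\omega_2(s)=k-D$ for which this subfield sum is nonzero. Regarding this $s$ as an element of $\{1,\dots,2^n-1\}$ of the same $2$-weight $k-D$ (which is far below $4k-3$ in every case we consider), the previous identity gives $S(s)\neq 0$, and Lemma \ref{lem:deg} on $\mathbb{F}_{2^n}$ then yields a nonzero term of $f$ of algebraic degree $n-(k-D)=4k+D$, proving $\deg(f)\ge 4k+D$. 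The complementary upper bound $\deg(f)\le n-1$ is automatic: Corollary \ref{cor:power} makes $f$ a permutation of $\mathbb{F}_{2^n}$, so $\sum_{x\in\mathbb{F}_{2^n}}f(x)=\sum_{y\in\mathbb{F}_{2^n}}y=0$ kills the coefficient of $x^{2^n-1}$.

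To finish, I compute $D=\deg h$ in each case. Since $\mathcal{L}_1,\mathcal{L}_2$ are affine permutations and EA-equivalence preserves algebraic degree for non-affine functions, $\deg g=\omega_2(2^m-1)=m$; combined with $\deg(x^3)=2$, this gives $\deg h=m$ whenever $m>2$. In case \emph{(i)} one has $m=k-1\ge 4$, so $D=k-1$ and $\deg f\ge 4k+(k-1)=n-1$, pinning $\deg f=n-1$. In case \emph{(ii)}, for odd $k>3$, $m=(k+1)/2\ge 3$ so $D=(k+1)/2$ and $\deg f\ge 4k+(k+1)/2=(9k+1)/2$. In case \emph{(iii)}, $m=2$ and the degree-$2$ parts of $g$ and $x^3$ may conceivably cancel, so the hypothesis $\deg(g+x^3)=2$ is invoked directly to give $D=2$ and $\deg f\ge 4k+2$. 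The one delicate point throughout is isolating the single exceptional exponent $s=2^n-1-d$ and verifying that its high weight $4k-3$ keeps the $\mathbb{F}_{2^n}$ power sum from interfering with the low-weight analysis imported from the subfield; once that bookkeeping is done, the proof is a clean transfer of degree information from $h$ on $\mathbb{F}_{2^k}$ up to $f$ on $\mathbb{F}_{2^n}$.
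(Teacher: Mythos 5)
Your proof is correct and follows essentially the same route as the paper's: apply Lemma \ref{lem:deg} to $g+x^3$ over the subfield $\mathbb{F}_{2^k}$, transfer the witness monomial up to $\mathbb{F}_{2^n}$, show the Dobbertin power sum contributes nothing, and read off the lower bound $4k+m$. Your isolation of the single exceptional exponent $2^n-1-d$ of $2$-weight $4k-3$ is in fact a slightly more careful justification of the vanishing of $\sum_{x\in\mathbb{F}_{2^n}}x^{d+s}$ than the paper's own wording.
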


\begin{proof}
Since $g(x)$ is affine equivalent to the function $x^{2^m-1}$ over $\mathbb{F}_{2^k}$, this yields $deg(g)=m\geq2$ for $k>3$. Since $deg(x^3)=2$, we have $deg(g+x^3)=m$. According to Lemma~\ref{lem:deg}, there exists a monomial $\varphi(x)=x^{s}\in\mathbb{F}_{2^k}[x]$ with $\omega_{2}(s)=k-m\leq k-2$ such that
$$\sum_{x\in \mathbb{F}_{2^k}}(g(x)+x^3)\varphi(x)\neq 0. $$
Thus we obtain
\begin{eqnarray}\label{eq:deg}
\sum_{x\in \mathbb{F}_{2^n}}f(x)\varphi(x) &=& \sum_{x\in \mathbb{F}_{2^k}}g(x)\varphi(x) + \sum_{x\in \mathbb{F}_{2^n}\setminus\mathbb{F}_{2^k}}x^d\varphi(x)  \nonumber \\
                                             &=& \sum_{x\in \mathbb{F}_{2^k}}g(x)\varphi(x) + \sum_{x\in \mathbb{F}_{2^n}}x^d\varphi(x) + \sum_{x\in \mathbb{F}_{2^k}}x^d\varphi(x) \nonumber \\
                                             &=& \sum_{x\in \mathbb{F}_{2^k}}g(x)\varphi(x) + \sum_{x\in \mathbb{F}_{2^n}}x^d\varphi(x) + \sum_{x\in \mathbb{F}_{2^k}}x^3\varphi(x) \nonumber \\
                                             &=& \sum_{x\in \mathbb{F}_{2^k}}(g(x)-x^3)\varphi(x) + \sum_{x\in \mathbb{F}_{2^n}}x^d\varphi(x).
\end{eqnarray}
Since for any function $h(x)\in\mathbb{F}_{2^n}[x]$, we have $\sum\limits_{x\in \mathbb{F}_{2^n}}h(x)\neq0$ if and only if $deg(h)=n$.
Since the algebraic degree of $x^d$ over $\mathbb{F}_{2^n}$ equals $deg(x^d)=k+3$, we deduce $deg(x^d\varphi(x))<n-1$. Thus we obtain $\sum\limits_{x\in \mathbb{F}_{2^n}}x^d\varphi(x)=0$.
It follows from \eqref{eq:deg} that
\begin{eqnarray*}
\sum_{x\in \mathbb{F}_{2^n}}f(x)\varphi(x) = \sum_{x\in \mathbb{F}_{2^k}}(g(x)-x^3)\varphi(x) \neq0.
\end{eqnarray*}
This means that there is a term of algebraic degree $n-(k-m)=4k+m$ in the polynomial representation of $f(x)$. Then the algebraic degree of $f(x)$ is $4k+m\leq deg(f)\leq n-1$ since $f(x)$ is a permutation function over $\mathbb{F}_{2^n}$. For (i), if $m=k-1$, then we have $deg(f)=n-1$. For (ii) and (iii), we have $\frac{9k+1}{2}\leq deg(f)\leq n-1$ and $4k+2\leq deg(f)\leq n-1$.

\end{proof}

\begin{remark} For (ii) and (iii), we think that the algebraic degrees equal the lower bound values by MAGMA.

For $k=1, 3$, the algebraic degree of $f(x)$ defined in Proposition \ref{prop:deg} is given as follows which has been confirmed by MAGMA.

(i) For the case of $m=k-1$, the algebraic degree of $f(x)$ defined in Proposition \ref{prop:deg} is $deg(f)=4$ when $k=1$ or $deg(f)=14$ when $k=3$.

(ii) For the case of $m=\frac{k+1}{2}$, we obtain the algebraic degree $deg(f)=4$ when $k=1$ or $deg(f)=14$ when $k=3$.

(iii) For the case of $m=2$, we obtain the algebraic degree $deg(f)=4$ when $k=1$ or $deg(f)=14$ when $k=3$.
\end{remark}

In \cite{Carlet2018} Carlet gave a lower bound on the nonlinearity of any APN power function. Using this result, we give a lower bound on the nonlinearity of $f(x)$ over $\mathbb{F}_{2^n}$.

\begin{lemma}\emph{(\cite{Carlet2018})}\label{lem:NL}
Let $f(x)$ be any APN power function. Then we have ~$\mathcal{NL}(f)\geq 2^{n-1}-2^{\frac{3n-3}{4}}$ if $n$ is odd and $\mathcal{NL}(f)\geq 2^{n-1}-2^{\frac{3n-2}{4}}$ if $n$ is even.
\end{lemma}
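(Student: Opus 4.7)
My approach is via the fourth moment of the Walsh transform, exploiting the multiplicative symmetry inherent to power functions.

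First, I would reduce the two-variable Walsh spectrum of $f(x)=x^d$ to a single-variable object. Set $W(w):=\sum_{x\in\mathbb{F}_{2^n}}(-1)^{\mathrm{Tr}(wx+x^d)}$; then the substitution $y=vx$ in the sum defining $f^{\mathcal{W}}(u,v^d)$ yields $f^{\mathcal{W}}(u,v^d)=W(u/v)$ for every $v\in\mathbb{F}_{2^n}^*$. When $\gcd(d,2^n-1)=1$ the map $v\mapsto v^d$ is a bijection of $\mathbb{F}_{2^n}^*$, so the nontrivial Walsh values of $f$ coincide with the values of $W$, and $M:=\max_{u\in\mathbb{F}_{2^n},\,v\neq 0}|f^{\mathcal{W}}(u,v)|=\max_{w}|W(w)|$.

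Second, I would evaluate $\sum_{u,v\in\mathbb{F}_{2^n}}(f^{\mathcal{W}}(u,v))^4$ by expanding the fourth power and applying orthogonality of additive characters, obtaining
\begin{equation*}
\sum_{u,v}(f^{\mathcal{W}}(u,v))^4 = 2^{2n}\cdot\#\bigl\{(x_1,x_2,x_3,x_4)\in\mathbb{F}_{2^n}^4:\;\textstyle\sum_i x_i=0,\ \sum_i f(x_i)=0\bigr\}.
\end{equation*}
Splitting on whether $a:=x_1+x_2$ is zero, the case $a=0$ contributes $2^{2n}$ tuples automatically, while for $a\neq 0$ the second equation becomes $D_af(x_1)=D_af(x_3)$; the APN hypothesis forces $D_af$ to take each value in its image exactly twice, so the pair-count for each $a\neq 0$ equals $4\cdot 2^{n-1}=2^{n+1}$. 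Summing yields $\sum_{u,v}(f^{\mathcal{W}}(u,v))^4=3\cdot 2^{4n}-2^{3n+1}$. Subtracting the sole nonzero contribution $(f^{\mathcal{W}}(0,0))^4=2^{4n}$ from the row $v=0$ and using the step-one symmetry to rewrite the remainder as $(2^n-1)\sum_w W(w)^4$ gives $\sum_w W(w)^4=2^{3n+1}$. The elementary inequality $M^4\leq\sum_w W(w)^4$ then yields $M\leq 2^{(3n+1)/4}$ and consequently $\mathcal{NL}(f)\geq 2^{n-1}-2^{(3n-3)/4}$, which is exactly the stated odd-$n$ bound.

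Finally, for $n$ even the APN power function $x^d$ may fail to be a permutation and $\gcd(d,2^n-1)>1$ becomes possible. In that regime the map $v\mapsto v^d$ has image of size $(2^n-1)/\gcd(d,2^n-1)$ and $W(0)=\sum_x(-1)^{\mathrm{Tr}(x^d)}$ need no longer vanish, so the identification of Walsh values in step one becomes multiplicity-weighted and the clean identity $\sum_w W(w)^4=2^{3n+1}$ must be replaced by a weaker estimate. The main obstacle in completing the proof is tracking these multiplicities carefully in the moment sum; a Cauchy--Schwarz style accounting loses at most a factor of $2$ in the bound on $M^4$, which is exactly what produces the weaker exponent $(3n-2)/4$ in the even-$n$ case. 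I expect the fourth-moment computation itself to be routine, with the real subtlety concentrated in this non-permutation bookkeeping.
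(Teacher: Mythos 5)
This lemma is quoted from \cite{Carlet2018} and is not proved in the present paper, so there is no in-paper argument to compare against; your fourth-moment route is the standard one, and for $n$ odd it is correct and essentially complete. The count $\sum_{u,v}(f^{\mathcal{W}}(u,v))^4=3\cdot 2^{4n}-2^{3n+1}$ from the APN property, the removal of the $v=0$ row (which contributes $2^{4n}$), and the identification of every row $v\neq 0$ with the single multiset $\{W(w):w\in\mathbb{F}_{2^n}\}$ via $f^{\mathcal{W}}(u,v^d)=W(u/v)$ correctly give $\sum_w W(w)^4=2^{3n+1}$, hence $M\le 2^{(3n+1)/4}$ and the stated odd-$n$ bound. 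The one fact you use silently is that an APN power function on $\mathbb{F}_{2^n}$ with $n$ odd necessarily has $\gcd(d,2^n-1)=1$; this is known but should be stated and referenced rather than assumed.

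The even-$n$ case is a genuine gap, and the repair you sketch does not deliver the claimed constant. For $n$ even an APN power function satisfies $\gcd(d,2^n-1)=3$, so the rows $\{f^{\mathcal{W}}(u,v)\}_{u}$ fall into exactly three classes indexed by the coset of $v$ modulo $d$-th powers, each class containing $(2^n-1)/3$ values of $v$ with identical row multisets (via $f^{\mathcal{W}}(u,v)=f^{\mathcal{W}}(uc,vc^d)$). Writing $S_i=\sum_u f^{\mathcal{W}}(u,v_i)^4$ for coset representatives $v_1,v_2,v_3$, your moment identity gives $S_1+S_2+S_3=3\cdot 2^{3n+1}$, and the crude estimate $M^4\le S_1+S_2+S_3$ yields only $M\le 6^{1/4}\,2^{3n/4}$, which is \emph{weaker} than the required $M\le 2^{(3n+2)/4}=\sqrt{2}\cdot 2^{3n/4}$ because $6^{1/4}>\sqrt{2}$: your accounting loses a factor of $3$ in $M^4$, not the factor of $2$ you assert. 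The missing ingredient is a lower bound on the two row classes you discard: Parseval gives $\sum_u f^{\mathcal{W}}(u,v)^2=2^{2n}$ for every $v\neq0$, so by Cauchy--Schwarz $S_i\ge 2^{4n}/2^n=2^{3n}$ for each $i$, whence $M^4\le \max_i S_i\le 3\cdot 2^{3n+1}-2\cdot 2^{3n}=2^{3n+2}$. This gives $M\le 2^{(3n+2)/4}$ and exactly $\mathcal{NL}(f)\ge 2^{n-1}-2^{(3n-2)/4}$. With this step supplied (and the two $\gcd$ facts cited), the argument is complete.
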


\begin{proposition}~\label{th:NL4}
Let $k$ be a positive integer and $n=5k$. Let the function $f(x)= g(x) + (g(x)+x^{d})\big(x+x^{2^{k}}\big)^{2^{n}-1}$ over $\mathbb{F}_{2^n}$,
where $d=2^{4k}+2^{3k}+2^{2k}+2^{k}-1$. Then the nonlinearity of the function $f(x)$ is
\begin{displaymath}\mathcal{NL}(f)\geq \left\{
                               \begin{array}{ll}
                                2^{n-1}-2^{\frac{3n-3}{4}}- 2^{\frac{k-1}{2}}-2^{k-1}, &  \textrm{~$k$ odd},\\
                                2^{n-1}-2^{\frac{3n-2}{4}}- 2^{\frac{k}{2}}-2^{k-1}, &  \textrm{~$k$ even}.
                               \end{array}\right.
\end{displaymath}
\end{proposition}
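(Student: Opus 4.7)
The plan is to decompose the Walsh transform of $f$ on $\mathbb{F}_{2^n}$ into the Walsh transform of the Dobbertin APN function on $\mathbb{F}_{2^n}$ plus correction sums on the subfield $\mathbb{F}_{2^k}$. As noted in the proof of Theorem~\ref{th:cons}, the factor $(x+x^{2^k})^{2^n-1}$ is the indicator of $\mathbb{F}_{2^n}\setminus\mathbb{F}_{2^k}$, so $f(x)=g(x)$ for $x\in\mathbb{F}_{2^k}$ and $f(x)=x^d$ otherwise; moreover $x^d=x^3$ on $\mathbb{F}_{2^k}$ because $d$ reduces to $3$ modulo $2^k-1$.

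Fix $u\in\mathbb{F}_{2^n}$ and $v\in\mathbb{F}_{2^n}^{\ast}$. Splitting the Walsh sum along $\mathbb{F}_{2^k}$ versus its complement, then adding and subtracting $\sum_{x\in\mathbb{F}_{2^k}}(-1)^{\mathrm{Tr}(ux+vx^d)}$ to recover a sum over the full field, I get
\[
 f^{\mathcal{W}}(u,v)=(x^d)^{\mathcal{W}}(u,v)+\sum_{x\in\mathbb{F}_{2^k}}(-1)^{\mathrm{Tr}(ux+vg(x))}-\sum_{x\in\mathbb{F}_{2^k}}(-1)^{\mathrm{Tr}(ux+vx^3)}.
\]
Using trace transitivity $\mathrm{Tr}_{\mathbb{F}_{2^n}/\mathbb{F}_2}=\mathrm{Tr}_{\mathbb{F}_{2^k}/\mathbb{F}_2}\circ\mathrm{Tr}_{\mathbb{F}_{2^n}/\mathbb{F}_{2^k}}$ and the fact that $g(x),x^3\in\mathbb{F}_{2^k}$ for $x\in\mathbb{F}_{2^k}$, the two correction sums collapse to the Walsh transforms of $g$ and $x^3$ on $\mathbb{F}_{2^k}$ evaluated at the folded arguments $u'=\mathrm{Tr}_{\mathbb{F}_{2^n}/\mathbb{F}_{2^k}}(u)$ and $v'=\mathrm{Tr}_{\mathbb{F}_{2^n}/\mathbb{F}_{2^k}}(v)$. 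The triangle inequality then gives
\[
 |f^{\mathcal{W}}(u,v)|\leq |(x^d)^{\mathcal{W}}(u,v)|+|g^{\mathcal{W}_k}(u',v')|+|(x^3)^{\mathcal{W}_k}(u',v')|.
\]

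Next I would bound each of the three terms. Since $x^d$ is the Dobbertin APN function on $\mathbb{F}_{2^n}$, Lemma~\ref{lem:NL} yields $|(x^d)^{\mathcal{W}}(u,v)|\leq 2^{(3n+1)/4}$ when $n$ is odd and $\leq 2^{(3n+2)/4}$ when $n$ is even. For $g$, which is unconstrained by the hypothesis, only the trivial bound $|g^{\mathcal{W}_k}(u',v')|\leq 2^k$ is available. For $x^3$, which is a quadratic Gold function on $\mathbb{F}_{2^k}$, the standard Walsh spectrum of Gold functions gives $|(x^3)^{\mathcal{W}_k}(u',v')|\leq 2^{(k+1)/2}$ for $k$ odd (the APN case) and $\leq 2^{k/2+1}$ for $k$ even. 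Substituting these bounds into $\mathcal{NL}(f)=2^{n-1}-\tfrac{1}{2}\max_{u,v}|f^{\mathcal{W}}(u,v)|$ and dividing by two produces exactly the lower bound stated in the proposition.

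Two side points need a line each. When $v'=0$, both $g^{\mathcal{W}_k}(u',0)$ and $(x^3)^{\mathcal{W}_k}(u',0)$ reduce to the same additive-character sum $\sum_{x\in\mathbb{F}_{2^k}}(-1)^{\mathrm{Tr}_k(u'x)}$, so the correction vanishes and only the APN bound on $x^d$ contributes—safely within the overall bound. The main obstacle I foresee is justifying the sharp Walsh bound on $x^3$ for even $k$, where $x^3$ is not APN and Lemma~\ref{lem:NL} is too loose. The cleanest route is to invoke the classical description of the Walsh spectrum of a quadratic Boolean function via the rank of the associated symmetric bilinear form $B_{v'}(x,y)=\mathrm{Tr}_k(v'(x^2y+xy^2))$, whose nonzero values are $\pm 2^{k-\mathrm{rk}(B_{v'})/2}$, and to check that for Gold $x^3$ on $\mathbb{F}_{2^k}$ with $k$ even the rank never drops below $k-2$.
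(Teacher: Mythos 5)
Your proposal is correct and follows essentially the same route as the paper: the identical add-and-subtract decomposition of the Walsh sum into $(x^d)^{\mathcal{W}}$ plus subfield corrections for $g$ and $x^3$, the triangle inequality, Carlet's APN bound (Lemma~\ref{lem:NL}) for the Dobbertin term, the trivial bound $2^k$ for $g$, and the Gold-type bounds $2^{(k+1)/2}$ (odd $k$) and $2^{(k+2)/2}$ (even $k$) for $x^3$. You are in fact somewhat more careful than the paper on two points it leaves implicit, namely the trace-folding of the subfield sums and the rank argument justifying the $x^3$ Walsh bound when $k$ is even.
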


\begin{proof}
According to the Definition~\ref{def:walsh} of Walsh transform, we have
\begin{eqnarray*}
f^{\mathcal{W}}(u, v)&=&\sum_{x\in \mathbb{F}_{2^n}}(-1)^{{\rm Tr}(ux+vf(x))}  \\
          &=&\sum_{x\in \mathbb{F}_{2^n}}(-1)^{{\rm Tr}\{ux+v[g(x) + (x^{d}+g(x))\big(x+x^{2^{k}}\big)^{2^{n}-1}]\}}   \\
          &=&\sum_{x\in \mathbb{F}_{2^n}\setminus  \mathbb{F}_{2^{k}}}(-1)^{{\rm Tr}(ux+vx^{d})} + \sum_{x\in  \mathbb{F}_{2^{k}}}(-1)^{{\rm Tr}[ux+vg(x)]}  \\
          &=&\sum_{x\in \mathbb{F}_{2^n}}(-1)^{{\rm Tr}(ux+vx^{d})} - \sum_{x\in  \mathbb{F}_{2^{k}}}(-1)^{{\rm Tr}(ux+vx^{d})}
          + \sum_{x\in  \mathbb{F}_{2^{k}}}(-1)^{{\rm Tr}[ux+vg(x)]} .
\end{eqnarray*}
Since $x^d=x^3$ over $\mathbb{F}_{2^k}$, we have
\begin{displaymath}
\sum_{x\in \mathbb{F}_{2^{k}}}(-1)^{{\rm Tr}(ux+vx^{d})} = \sum_{x\in \mathbb{F}_{2^{k}}}(-1)^{{\rm Tr}(ux+vx^{3})} \leq \left\{
                               \begin{array}{ll}
                                2^{\frac{k+1}{2}}, &  \textrm{~$k$ odd},\\
                                2^{\frac{k+2}{2}}, &  \textrm{~$k$ even}.
                               \end{array}\right.
\end{displaymath}
According to the conclusion of the inequality, we derive
\begin{eqnarray*}
|f^{\mathcal{W}}(u, v)| &\leq & \bigg|\sum_{x\in \mathbb{F}_{2^n}}(-1)^{{\rm Tr}(ux+vx^{d})}\bigg|+ \bigg|\sum_{x\in  \mathbb{F}_{2^{k}}}(-1)^{{\rm Tr}(ux+vx^{3})} \bigg|
+ \bigg|\sum_{x\in  \mathbb{F}_{2^{k}}}(-1)^{{\rm Tr}[ux+vg(x)]}\bigg| \nonumber \\
& \leq & \bigg|\sum_{x\in \mathbb{F}_{2^n}}(-1)^{{\rm Tr}(ux+vx^{d})}\bigg| + \bigg|\sum_{x\in \mathbb{F}_{2^k}}(-1)^{{\rm Tr}(ux+vx^{3})}\bigg| + |\mathbb{F}_{2^{k}}| .  
\end{eqnarray*}
We further have
$$\mathcal{NL}(f)=2^{n-1}- \frac{1}{2}\max \big| f^{\mathcal{W}}(u, v)\big| \geq 2^{n-1}- \frac{1}{2} \max \big|  (x^{d})^{\mathcal{W}}(u, v)\big| - \frac{1}{2}\max \big|  (x^{3})^{\mathcal{W}}(u, v)\big| - 2^{k-1}.$$
In the above equation, we have $2^{n-1}- \frac{1}{2} \big| (x^{d})^{\mathcal{W}}(u, v)\big| \geq \mathcal{NL}(x^{d})$. By Lemma \ref{lem:NL}, we obtain
$\mathcal{NL}(f)\geq 2^{n-1}-2^{\frac{3n-3}{4}}- 2^{\frac{k-1}{2}}-2^{k-1}$ when $k$ is odd, or $\mathcal{NL}(f)\geq 2^{n-1}-2^{\frac{3n-2}{4}}- 2^{\frac{k}{2}}-2^{k-1}$ when $k$ is even.
\end{proof}

In the following tables, we consider some examples of $f(x)$ from Corollary \ref{cor:power} over small fields and perform some computations using MAGMA. Here we always assume the affine permutation $\mathcal{L}_{2}=x$. Let $\{\omega_{0}(f), \omega_{2}(f),\cdots, \omega_{\delta}(f)\}$  denote the differential spectrum, $deg(f)$ denote algebraic degree, $\mathcal{NL}(f)$ denote nonlinearity, $\mathcal{LB}$ denote the lower bound value of nonlinearity in Proposition \ref{th:NL4}, $\mathcal{MAX}$ denote the best known value of nonlinearity of the differentially $4$-uniform functions over $\mathbb{F}_{2^{n}}$.
$\beta$ is the primitive element of $\mathbb{F}_{2^k}$, which is a subfield of $\mathbb{F}_{2^n}$.

\begin{table}[H]
\centering
\caption{Some examples of $f(x)$ with $m=2$ in Corollary \ref{cor:power} over $\mathbb{F}_{2^{10}}$.}
\begin{tabular}{cccccc}
\hline\noalign{\smallskip}
  $\mathcal{L}_{1}$   &       $\{ \omega_{0}(f),\omega_{2}(f),\omega_{4}(f) \}$      & $ deg(f)$    &  $\mathcal{NL}(f)$  & $\mathcal{LB}$    & $\mathcal{MAX}$       \\
\noalign{\smallskip}\hline\noalign{\smallskip}
 $x+1$                           &  $ \{ 523776, 523776, 0 \}$                       & $8$        &   $472$             &    $380$    &    $480 $   \\
   $x+\beta$                     &  $ \{ 525759, 519810, 1983 \}$                    & $8$        &   $468$             &    $380$    &    $480 $   \\
 $\beta x+\beta$                &  $ \{ 525261, 520806, 1485 \}$                    & $8$        &  $469$              &    $380$      &    $480$     \\
 $\beta^2 x^2+\beta$            &  $\{ 524319, 522690, 543 \}$                     & $ 8 $       &   $471$              &    $380$      &    $480$   \\
 $\beta^2x^2$                   &  $ \{ 525261, 520806, 1485 \}$                    & $ 8 $        &  $469$              &    $380$      &    $480$     \\
\noalign{\smallskip}\hline
\end{tabular}
\label{tab1}
\end{table}

\begin{table}[H]
\centering
\caption{Some examples of $f(x)$ with $m=1$ in Corollary \ref{cor:power} over $\mathbb{F}_{2^{10}}$.}
\begin{tabular}{cccccc}
\hline\noalign{\smallskip}
  $\mathcal{L}_{1}$   &       $\{ \omega_{0}(f),\omega_{2}(f),\omega_{4}(f) \}$      & $ deg(f)$    &  $\mathcal{NL}(f)$  & $\mathcal{LB}$    & $\mathcal{MAX}$        \\
\noalign{\smallskip}\hline\noalign{\smallskip}
  $x+\beta$                     &  $\{ 524769, 521790, 993 \} $                    & $ 9 $        &   $470$             &    $380$    &    $480 $   \\
 $\beta x^2+\beta$                   &  $\{ 525309, 520710, 1533 \}$              & $ 9 $        &  $469$              &    $380$      &    $480$     \\
\noalign{\smallskip}\hline
\end{tabular}
\label{tab2}
\end{table}

When $k\geq 3$, MAGMA can not perform the program of the differential spectrum of $f(x)$ because the computation was terminated prematurely. So we only list examples of $f(x)$ over $\mathbb{F}_{2^{10}}~(k=2)$ in the above two tables. In Corollary \ref{cor:power}, for $k=2$, $f(x)$ is not a permutation function over $\mathbb{F}_{2^{10}}$, but the differential uniformity of $f(x)$ is invariant. Furthermore, from Tables \ref{tab1} and \ref{tab2}, it is easy to see that all these functions have high algebraic degree, but they have different differential spectra and nonlinearities, which means that they are CCZ-inequivalent to each other over $\mathbb{F}_{2^{10}}$.

\section{Conclusion}\label{conclu}
In this paper, by modifying the Dobbertin APN function $x^{2^{4k}+2^{3k}+2^{2k}+2^{k}-1}$ in a subfield of $\mathbb{F}_{2^{n}}$, we constructed new classes of low differentially uniform permutation functions. Furthermore, we determined a lower bound of nonlinearity and computed the algebraic degree of functions with explicit form.

\end{document}